\newtheorem{theorem}{Theorem}
\newtheorem{conjecture}[theorem]{Conjecture}
\newtheorem{corollary}[theorem]{Corollary}
\theoremstyle{definition}
\newtheorem{definition}[theorem]{Definition}
\def\({\left(}
\def\){\right)}
\title{Analysis of a Poisson-picking symmetric winners-take-all game with randomized payoffs}
\author{Abel Molina}
\begin{document}

\maketitle

\begin{abstract}
Winners-take-all situations introduce an incentive for agents to diversify their behavior, since doing so will result in splitting an eventual price with fewer people. At the same time, when the payoff of a process depends on a parameter choice that is symmetric with respect to agents, all agents have the incentive to choose the values of the parameter that lead to higher payoffs. We explore the trade-off between these two considerations, with a focus on a particular example. This example can be seen as a simple model for the situation where a group of friends bet against each other about the top-scoring team in a sports league. We obtain analytic characterizations of the symmetric equilibria in the case of only $2$ agents and in the case where there are only two possible top-scorers. We also conduct some simulations beyond these cases, and observe how does the pressure to diversify behavior evolve as the parameters of the model change.
\end{abstract}

\section{Introduction}
In \textbf{winners-take-all} situations, there is pressure for agents to diversify their behavior. An extreme example of this diversification pressure can be seen in LUPI lottery games \cite{ostling2011testing}, where the payoffs are determined by a method along the following lines:

\begin{quote}
Take one money unit from everyone who chooses an integer. The person who chooses the smallest integer such that no one else chose the same integer gets 80\% of those money units. Everyone else does not receive any money.
\end{quote}

\noindent The fact that the winner corresponds to the unique smallest chosen number means that there is pressure towards picking smaller numbers. However, the single winner constraint gives us an incentive not to select numbers that we think other people might select, which trades off with the previous preference for smaller numbers.

A more nuanced simlar situation where several winners are possible, but there is still pressure to diversify behavior, is the following one:

\begin{quote}
Three friends bet on whether team A or team B will win a sports game, with ties in the game not possible. To do so, each of them puts a money unit into a common pool, and makes a prediction of whether team A or team B will win. Then, if anyone chose the right winner, they will split those money units (we asume the units are arbitrarily divisible). Otherwise, everyone will get back their money unit. Team A is the favorite, with a 60\% winning chance.
\end{quote}

\noindent One can verify that everyone always picking the favorite as the winner is not a Nash equilibrium. This is because if two agents always pick the favorite team, the expected payoff for the third agent when picking the underdog is $0.2$, but is just $0$ when picking the favorite team. This contradicts a ``greedy" intuition, which would reason that we only have a positive payoff when we select the team who wins, and therefore we should aim to select the winning team as often as possible.

This example falls within the camp of  \textbf{parimutuel} betting - this is the term for situations where all bets about a particular outcome are placed together in a pool, any taxes and house fees are then removed (in the examples we explore there is no such thing), and finally the payoffs are calculated by sharing the pool among all winning bets. 

Furthermore, this example extends naturally to the example we focus our study on, which is a tweak of the previous situation. In the example we focus on, those friends are not betting on the outcome of a competition, and instead they are trying to guess which team will be the top goal-scorer (or point-scorer) in a league. We model this through the following process:

\begin{quote}
A group of friends each bet on which of several independent Poisson processes will result in a larger number of events when running for a time unit.  To do so, each of them puts a money unit into a common pool.  Then, the money in the pool is divided uniformly among the friends who picked the processes with the most events, out of those processes that have been selected by at least one person.
\end{quote}

This assumes that Poisson processes are an appropriate choice to model goal-scoring in the game at hand. The reasonable accuracy of Poisson distributions as a simple model for the total number of goals in a sports game is discussed in \cite{maher1982modelling, dixon1997modelling} for the case of association football, and  \cite{mullet1977simeon, thomas2007inter}  for the case of ice hockey. There is no published research to our knowledge that studies the fit of individual scoring tallies to Poisson distributions, but it does seem like an intuitive choice for a simple model as well. However, when looking into individual scoring tallies one could reasonably expect to see correlations between players in the same team, which contradicts the assumption of independence. Our model is likely then not to adjust accurately to those situations. Note that such correlations could also reasonably exist between the goals scored by teams facing each other, but over the period of a whole competition usually only a small fraction of a team's games will be against any particular opponent, so this is not a strong issue in terms of our initial interpretation.

Parimutuel betting has been considered previously in the academic literature, with some of the best-known studies looking into aspects such as:

\begin{enumerate}[(a)]

\item Biases exhibited by participants in several parimutuel betting markets involving horse-racing \cite{thaler1988anomalies}.

\item How the bias towards underdogs on the side of some bettors might be partially caused by the house taking a cut of the pool \cite{hurley1995note}.

\item Interpreting parimutuel betting settings as prediction markets \cite{plott2003parimutuel, pennock2004dynamic}.

\end{enumerate}

One can also look at our problem in the wider context of situations where there is pressure for an agent to diversity its actions away from a greedy choice. A well-known category of these are multi-armed bandit situations \cite{robbins1985some}, where the uncertainty about an agent's own model leads to the pressure for diversification. Another similar situation is the one in \cite{chawla2014mechanism}, where an auctioneer has the incentive to conduct non-optimal auctions in order to gain knowledge about bidders, and therefore increase the revenue of future auctions.

The setting here is also slightly reminiscent of the setting of all-pay auctions, since we also have all agents making a payment, but only some of them receiving some positive utility in return. The spirit of our results is then similar in part to that of the results in \cite{dulleck2006all}, which show how budget-constrained agents are sometimes better off engaging in risky behavior in order to improve their outcome in the context of  all-pay auctions.

We proceed now to the study of our game. In Section \ref{sec:rigurous}, we will give a definition of it that is mathematically precise, and places it within the context of a larger class of games. We will generally focus on our Poisson-picking example, but we will point it out when our results easily extend to the larger class. Then, in Section \ref{sec:anstudy}  we will approach from an analytical point of view the cases where the number of agents or Poisson processes is equal to $2$, and additionally consider a setting where other agents are known to behave greedily. Finally, in in Section \ref{sec:practicalAnalysis} we will consider some larger cases through simulation procedures.

We will analyze the problem from the point of view where agents are trying to maximize their expected payoff. Therefore, we will concern ourselves mostly with Nash equilibria, and we will particularly focus on symmetric mixed Nash equilibria in our theoretical study of equilibrium situations. This is because the payoff matrix of the situations we consider is symmetric, and therefore the foundational results of Nash \cite{nash1951non} establish that there will always be a symmetric mixed Nash equilibrium. Note that the situations we consider are symmetric zero-sum games, so the expected value of a Nash equilibrium is zero for everyone. Also, note that any symmetric choice of strategies has those optimal expected values ($0$ for everyone). However, as one can see in our initial example with the three betting friends, not all of those symmetric choices are Nash equilibria.  Note also that by linearity of expectation, when looking for Nash equilibria we  can replace the random variable corresponding to the payoffs (given a fixed choice of processes by the agents) by its expectation.

\section{Rigorous model definition}

\label{sec:rigurous}

\begin{definition}
\label{def:general}
Given integers $n \geq 2, m \geq 2$, and a random variable $X$ taking values in the set of ordered partitions of $\{1, \ldots, m\}$, the \textbf{winners-take-all pool} $W$ with parameters $(n, m, X)$ is the following game:

\begin{enumerate}

\item Each agent, labeled by a distinct integer $i \in \{1, \ldots, n\}$, donates 1 money unit to a common pool, and makes a choice of of an integer $c_i \in \{1, \ldots , m\}$. We denote by $C$ the set of all integers that have been chosen by at least one agent.

\item We sample a value from $X$. This value will be equal to a partition of $\{1, \ldots, m\}$, which we can write as $R  = (R_1, \ldots,  R_k)$.

\item Let $R_w$ be such that $w = \min(\{ l :  R_l \cap C \neq \emptyset \})$. Those agents who selected a value $c_i \in R_w$ split the pool between them. The total payoff for them is then $ \dfrac{n}{| i: c_i \in R_w |} - 1 $, while for everyone else the payoff is $-1$.

\end{enumerate}
\end{definition}

\begin{definition}
A \textbf{Poisson-picking pool} $PP$ is specified by parameters $(n, m, Y_1(\lambda_1),  \ldots, Y_m(\lambda_m))$, where $Y_1, \ldots, Y_m$ are $m$ random variables corresponding to $m$ independent Poisson processes with rates $\lambda_1, \ldots, \lambda_m$, respectively. Then,  $PP$  will be a winners-take-all pool  with parameters $(n, m, X(Y_1, \ldots, Y_m))$ where a value of $X(Y_1, \ldots, Y_m)$ is drawn using the following process:

\begin{enumerate}

\item We draw a value from each of $Y_1, \ldots, Y_m$, obtaining $m$ values $y_1, \dots, y_m$.

\item We return the non-empty level sets $L_y = \{i : y_i = y\} $, ordered  from the highest  drawn value of $y$ to the lowest.

\end{enumerate}
\end{definition}

For an agent with label $i$, we will use $s_i$ to refer to the variable corresponding to its strategy. We will allow these strategies to be mixed strategies, with $s_i(j)$ being the chance that agent $i$ selects value $j$. Similarly, we will use $P_i$ to refer to the payoff of the agent with label $i$.

\section{Analytical study}

\label{sec:anstudy}

\subsection{Adjusting behavior to greedy choices from other agents}

\label{sec:adjusting}

One can consider the situation where an agent has a poor opinion of the reasoning skills of other agents, and therefore is quite confident that all of them are going to choose the Poisson process with the highest rate. We explore this now for the simple case of two Poisson processes, and obtain the following result:

\begin{theorem} \label{thm:f1} Consider a Poisson-picking pool with parameters $(n, 2, Y_1(\lambda_1), Y_2(\lambda_2))$. Assume that agents $\{1, \ldots, n -1\}$ choose process $1$. When prob$(Y_2 > Y_1) > prob(Y_2 < Y_1)/(n-1)$, it is uniquely optimal for agent $n$ to choose process $2$. When prob$(Y_2 > Y_1) < prob(Y_2 < Y_1)/(n-1)$, it is uniquely optimal for agent $n$ to choose process $1$. When prob$(Y_2 > Y_1) = prob(Y_2 < Y_1)/(n-1)$, any mixed strategy is optimal.
\end{theorem}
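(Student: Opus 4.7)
The plan is to compute the expected payoff of agent $n$ under each of the two pure strategies (select process $1$ vs.\ select process $2$), and then extend to mixed strategies by linearity of expectation, which suffices since the space of agent $n$'s strategies is parametrized by a single probability.

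If agent $n$ picks process $1$, then the set of chosen indices is $C = \{1\}$, so whichever ordered partition is produced by the Poisson draws, the winning block $R_w$ must contain $1$, and all $n$ agents split the pool evenly. Hence every agent, including agent $n$, receives payoff $0$. If instead agent $n$ picks process $2$, then $C = \{1, 2\}$, and I would split on the three events comparing $Y_1$ and $Y_2$: when $Y_1 > Y_2$ the winning block is $\{1\}$ and agent $n$ receives $-1$; when $Y_2 > Y_1$ agent $n$ alone wins the pool and receives $n - 1$; when $Y_1 = Y_2$ all $n$ agents split the pool and agent $n$ receives $0$. Taking the expectation gives
\[
E[P_n \mid s_n = 2] = (n-1)\,\text{Pr}(Y_2 > Y_1) - \text{Pr}(Y_2 < Y_1).
\]

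Comparing this quantity to the payoff $0$ obtained from choosing process $1$ immediately yields the three dichotomies in the theorem. To upgrade the pure-strategy comparison into a uniqueness statement over all mixed strategies, I would observe that if $s_n$ picks process $2$ with probability $\alpha \in [0,1]$, linearity of expectation gives expected payoff $\alpha \cdot [(n-1)\,\text{Pr}(Y_2 > Y_1) - \text{Pr}(Y_2 < Y_1)]$, which is a linear function of $\alpha$. Hence a strictly positive coefficient forces the unique optimum $\alpha = 1$, a strictly negative coefficient forces the unique optimum $\alpha = 0$, and a zero coefficient makes every $\alpha$ equally optimal, matching the third case of the statement.

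The computation itself is essentially routine; the only delicate point is the tie event $Y_1 = Y_2$, which is not null for Poisson variables. Fortunately, in both case analyses above this event happens to give agent $n$ payoff $0$, so the tie probability cancels out of the comparison and does not appear in the threshold condition. I expect this bookkeeping, rather than any genuine conceptual step, to be the main (minor) hurdle.
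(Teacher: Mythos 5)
Your proposal is correct and follows essentially the same route as the paper: compute the expected payoff of each pure strategy for agent $n$ (noting that choosing process $1$ yields $0$ and choosing process $2$ yields $-1 + \text{prob}(Y_2 = Y_1) + n\,\text{prob}(Y_2 > Y_1) = (n-1)\,\text{prob}(Y_2 > Y_1) - \text{prob}(Y_2 < Y_1)$), then conclude for mixed strategies by linearity in the single mixing parameter. Your handling of the tie event and the resulting threshold condition agree exactly with the paper's.
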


\begin{proof}
If $n-1$ agents select process $1$ with probability $1$, the expected payoff $P_n$ of agent $n$ will be given by

\[
E(P_n) = s_n(1)*\(\dfrac{n}{n} -1\) + s_n(2)*(-1 + \dfrac{n}{n}*\text{prob}( Y_2 = Y_1) + n*\text{prob}( Y_2 > Y_1) ).
\]

\noindent We can see from this formula that if and only if $\text{prob}( Y_2 > Y_1) >  \text{prob}( Y_2 \neq Y_1)/n$, it will be uniquely optimal for the agent to always choose process $2$. This condition is equivalent to $\text{prob}( Y_2 > Y_1) >  \text{prob}( Y_2 < Y_1)/(n-1)$ The same happens with respect to process $1$, and the reverse condition $\text{prob}( Y_2 > Y_1) < \text{prob}( Y_2 < Y_1)/(n-1)$. In the the border case $\text{prob}( Y_2 > Y_1) = \text{prob}( Y_2 < Y_1)/(n-1)$ we have that any strategy is optimal.

\end{proof}

We aim to give in Figure \ref{fig:skellamPlot1} some intuition of what this means. In that figure, we consider several fixed choices of $n$, and plot the curve on the $(\lambda_1, \lambda_2)$ plane induced by the condition $\text{prob}( Y_2 > Y_1) =  \text{prob}( Y_2 < Y_1)/(n-1)$. This can be done using the equivalent condition $\text{prob}( Y_2 - Y_1 > 0) = \text{prob}( Y_2 - Y_1 < 0)/(n-1)$, and the fact that  $Y_2 - Y_1$ will be given by a Skellam distribution with the rates of the processes as its parameters.  One can observe that for $n=2$, it is always optimal to select the favorite.

\begin{figure}
\begin{center}

\includegraphics[height=10.5cm]{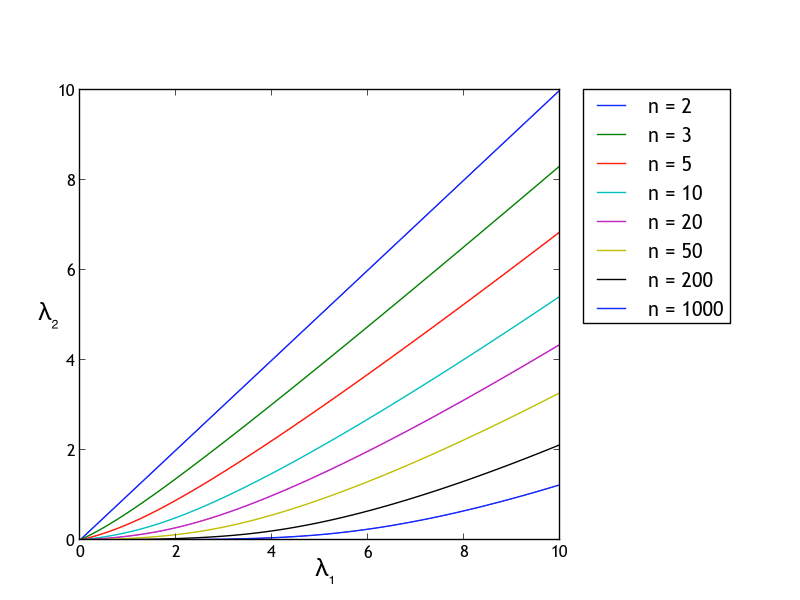}
\end{center}
  \caption{Plots in the ($\lambda_1$, $\lambda_2$) plane for the boundaries between it being optimal to select the favorite or the underdog in the context of Section \ref{sec:adjusting}. These boundaries are given by the condition $\text{prob}( Y_2 > Y_1) =  \text{prob}( Y_2 < Y_1)/(n-1)$.  \label{fig:skellamPlot1} }
\end{figure}

Note that the proof of Theorem \ref{thm:f1} extends to the case of general winners-take-all pools, by looking at what the sign of $Y_2 - Y_1$ represents in terms of the partitions in Definition \ref{def:general}. One obtains then the following theorem:

\begin{theorem}
Consider a winners-take-all pool $(n, 2, X)$. Assume that agents $\{1, \ldots, n -1\}$ choose process $1$. When prob$(X = \{2\}\{1\}) > prob(X =\{1\}\{2\})/(n-1)$, it is uniquely optimal for agent $n$ to choose process $2$. When prob$(X = \{2\}\{1\}) < prob(X =\{1\}\{2\})/(n-1)$, it is uniquely optimal for agent $n$ to choose process $1$.~When prob$(X = \{2\}\{1\}) = prob(X =\{1\}\{2\})/(n-1)$, any mixed strategy is optimal.
\end{theorem}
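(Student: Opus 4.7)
The plan is to mirror the proof of Theorem \ref{thm:f1}, but with the three sign outcomes of $Y_2 - Y_1$ replaced by the three possible values $X$ can take under Definition \ref{def:general} when $m = 2$, namely the ordered partitions $\{1,2\}$, $\{1\}\{2\}$, and $\{2\}\{1\}$. In the Poisson setting these correspond to a tie, to process $1$ strictly on top, and to process $2$ strictly on top, but the proof never needs that interpretation.

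First I would condition on agent $n$'s pure strategy. If $c_n = 1$, then $C = \{1\}$, and in each of the three partition cases the first block $R_w$ intersecting $C$ contains the index $1$: for $X = \{1,2\}$ one has $R_w = R_1 = \{1,2\}$; for $X = \{1\}\{2\}$, $R_w = R_1 = \{1\}$; for $X = \{2\}\{1\}$, $R_w = R_2 = \{1\}$. In every one of these cases all $n$ agents split the pool, giving payoff $n/n - 1 = 0$. If instead $c_n = 2$, then $C = \{1,2\}$ and $R_w = R_1$ in every case; the three partitions produce payoffs $0$ (full tie), $-1$ (agent $n$ not in $R_w$), and $n - 1$ (agent $n$ alone in $R_w$), respectively.

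Plugging these into $E(P_n) = s_n(1)\,E(P_n \mid c_n = 1) + s_n(2)\,E(P_n \mid c_n = 2)$ collapses the expression to
\[
E(P_n) = s_n(2)\bigl((n-1)\,\text{prob}(X = \{2\}\{1\}) - \text{prob}(X = \{1\}\{2\})\bigr),
\]
and the three cases of the theorem follow from the sign of the bracketed coefficient by the same final step used in the proof of Theorem \ref{thm:f1}: a strictly positive bracket forces $s_n(2) = 1$ uniquely, a strictly negative bracket forces $s_n(2) = 0$ uniquely, and a zero bracket makes $E(P_n)$ identically zero in $s_n$.

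The main obstacle is really just careful reading of Definition \ref{def:general} to confirm that choosing process $1$ produces payoff zero in all three partition outcomes, including $X = \{2\}\{1\}$, where the top block contains no chosen index and the $\min$ in the definition of $w$ correctly pushes the winning block down to the second component of the partition. Once this bookkeeping is done, the rest of the argument is essentially a transcription of the Poisson proof with $\text{prob}(Y_2 > Y_1)$ and $\text{prob}(Y_2 < Y_1)$ renamed.
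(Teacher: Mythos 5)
Your proposal is correct and is essentially the argument the paper intends: the paper proves this theorem only by remarking that the proof of Theorem \ref{thm:f1} extends by reinterpreting the sign of $Y_2 - Y_1$ as the ordering of $\{1\}$ and $\{2\}$ in the partition, and your case analysis of the three partitions together with the resulting expression $E(P_n) = s_n(2)\bigl((n-1)\,\text{prob}(X=\{2\}\{1\}) - \text{prob}(X=\{1\}\{2\})\bigr)$ is exactly that translation, agreeing with the expanded form of the payoff formula in the Poisson proof.
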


Note also that the  analysis in Theorem \ref{thm:f1} can be extended easily to the situation where agent $n$ can choose between an arbitrary number of processes. This is because the processes are independent of each other, and we know that the price would only be shared in case of choosing process $1$. Therefore, choosing a process that has the highest rate when ignoring process $1$ will be a weakly dominant strategy with respect to choosing any process other than process $1$. We obtain then the following corollary:

\begin{corollary}
Consider a Poisson-picking pool $(n, m, Y_1(\lambda_1), Y_2(\lambda_2), \ldots, Y_m(\lambda_m) )$. Assume that agents $\{1, \ldots, n -1\}$ choose process $1$. Let $Y_i$ be a highest rate process when ignoring process $1$. When prob$(Y_i > Y_1) \geq prob(Y_i < Y_1)/(n-1)$, it is optimal for agent $n$ to choose process $i$. When prob$(Y_i > Y_1) < \text{prob}(Y_i < Y_1)/(n-1)$, it is uniquely optimal for agent $n$ to choose process $1$.
\end{corollary}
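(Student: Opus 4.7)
The plan is to reduce this multi-process setting to the two-process situation already handled by Theorem~\ref{thm:f1}. First I would compute $E(P_n \mid c_n = j)$ for each possible choice $j \in \{1, \ldots, m\}$. When agent $n$ also picks process $1$, the set of chosen processes is $\{1\}$, so every agent shares the pool evenly and $E(P_n \mid c_n = 1) = 0$. When agent $n$ picks some process $j \neq 1$, the set of chosen processes is $\{1, j\}$, and agent $n$'s payoff depends only on how $Y_j$ compares to $Y_1$: the payoff is $n-1$ when $Y_j > Y_1$, it is $0$ when $Y_j = Y_1$ (all $n$ agents tie and share the pool), and $-1$ when $Y_j < Y_1$. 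This gives
\[
E(P_n \mid c_n = j) \;=\; (n-1)\,\text{prob}(Y_j > Y_1) \;-\; \text{prob}(Y_j < Y_1).
\]

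Next I would show that $E(P_n \mid c_n = j)$ is monotonically non-decreasing in $\lambda_j$. Since $Y_j$ is independent of $Y_1$ and a Poisson random variable is stochastically increasing in its rate parameter, raising $\lambda_j$ can only increase $\text{prob}(Y_j > Y_1)$ and decrease $\text{prob}(Y_j < Y_1)$. Hence if $i$ indexes a highest-rate process in $\{2, \ldots, m\}$, then choosing process $i$ weakly dominates (in expected payoff) choosing any other process $j \neq 1$.

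The last step collapses the decision to a comparison between process $i$ and process $1$, which is exactly the two-process setup of Theorem~\ref{thm:f1} applied to the pool $(n, 2, Y_1(\lambda_1), Y_i(\lambda_i))$. Reading off the cases there gives the first clause of the corollary. For the uniqueness claim in the second clause, note that when $\text{prob}(Y_i > Y_1) < \text{prob}(Y_i < Y_1)/(n-1)$ we have $E(P_n \mid c_n = 1) > E(P_n \mid c_n = i) \geq E(P_n \mid c_n = j)$ for every $j \neq 1$, so any mixed strategy placing positive weight outside $\{1\}$ has strictly smaller expected payoff and process $1$ is the unique best reply. The only subtle point in the whole argument is the stochastic-dominance step used to justify weak dominance; everything else is a direct reduction to the two-process case already treated.
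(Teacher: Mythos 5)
Your proposal is correct and follows essentially the same route as the paper: reduce to the two-process case of Theorem~\ref{thm:f1} by noting that, among the choices $j \neq 1$, a highest-rate process weakly dominates because the expected payoff $(n-1)\,\text{prob}(Y_j > Y_1) - \text{prob}(Y_j < Y_1)$ is monotone in $\lambda_j$. The paper asserts this dominance in one sentence by appealing to independence; your explicit justification via first-order stochastic dominance of Poisson variables in their rate parameter is a welcome filling-in of that step, and the rest matches the paper's argument.
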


\subsection{Nash equilibria}

\subsubsection{Only two agents}

We look first at symmetric Nash equilibria for the case $(2, m, Y_1(\lambda_1), \ldots, Y_m(\lambda_m))$, where we assume without loss of generality than $\lambda_1 \geq \lambda_2 \geq \ldots \geq \lambda_m$. Our main result is the following:

\begin{theorem}
\label{th:twoagents}
Consider any Poisson-picking game of the form $(2, m, Y_1(\lambda_1), \ldots, Y_m(\lambda_m))$, where we have two agents and $\lambda_1 \geq \lambda_2 \geq \ldots \geq \lambda_m$. Then, the best response for each agent is to select an arbitrary mix of processes with the highest rate, no matter what the other agent does.\end{theorem}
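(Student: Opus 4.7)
The plan is to show that any process tied for the highest rate weakly dominates every other pure strategy, from which the conclusion is immediate. To this end, fix any (possibly mixed) strategy $(p_1, \ldots, p_m)$ for the other agent, where $p_k$ is the probability of choosing process $k$. Applying Definition \ref{def:general} with $n = 2$, the expected payoff to our agent from playing the pure strategy $i$ works out to
\[
E(P_1 \mid i) \;=\; \sum_{k=1}^m p_k\, g(i, k),
\]
where $g(i, i) = 0$ (the two agents tie and split the pool) and for $i \neq k$,
\[
g(i, k) \;=\; \text{prob}(Y_i > Y_k) - \text{prob}(Y_i < Y_k),
\]
reflecting payoffs of $\pm 1$ for sole winners and sole losers, and $0$ on ties.

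The main step is a monotonicity claim: if $\lambda_i \geq \lambda_j$, then $g(i, k) \geq g(j, k)$ for every index $k$. Granting this, a mixture supported on highest-rate processes maximizes $E(P_1 \mid \cdot)$ pointwise in $k$ and hence against every $(p_k)$, which is exactly the statement of the theorem.

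To verify the monotonicity I would split on $k$. For $k = i$ we have $g(i, i) = 0$ while $g(j, i) \leq 0$, since the Skellam distribution of $Y_j - Y_i$ with $\lambda_j \leq \lambda_i$ puts at least as much mass on the negative integers as on the positive ones (with equality of masses in the $\lambda_j = \lambda_i$ case by symmetry). The case $k = j$ is symmetric. For $k \notin \{i, j\}$ I would use the standard additive coupling of Poissons: introduce independent $\tilde Y_j \sim \text{Pois}(\lambda_j)$, $\tilde Z \sim \text{Pois}(\lambda_i - \lambda_j)$, $\tilde Y_k \sim \text{Pois}(\lambda_k)$, and set $\tilde Y_i := \tilde Y_j + \tilde Z$. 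The joint laws of $(\tilde Y_i, \tilde Y_k)$ and $(\tilde Y_j, \tilde Y_k)$ then agree with those of the original pairs, while $\tilde Y_i \geq \tilde Y_j$ almost surely. Hence $\{\tilde Y_j > \tilde Y_k\} \subseteq \{\tilde Y_i > \tilde Y_k\}$ and $\{\tilde Y_i < \tilde Y_k\} \subseteq \{\tilde Y_j < \tilde Y_k\}$, and adding these two probability inequalities delivers $g(i, k) \geq g(j, k)$.

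The only genuine obstacle is this third case, and the key observation is that the coupling need not preserve independence between $Y_i$ and $Y_j$: only the marginal joint laws with $Y_k$ enter the quantities being compared. Once monotonicity is in hand, every process of highest rate is a weakly dominant pure response, so any mixture supported on those processes is a best response regardless of how the other agent plays.
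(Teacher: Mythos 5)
Your proof is correct and follows essentially the same route as the paper: both reduce the expected payoff against an arbitrary opponent strategy to a weighted sum of the terms $\text{prob}(Y_j > Y_k) - \text{prob}(Y_j < Y_k)$ and then appeal to monotonicity of these quantities in the rates to conclude that the highest-rate processes are (weakly) dominant. The only substantive difference is that the paper declares this monotonicity to be clear (and phrases the conclusion from agent 2's minimizing perspective via the zero-sum structure), whereas you actually establish it with the additive Poisson coupling $\tilde Y_i = \tilde Y_j + \tilde Z$ --- a worthwhile addition, since that stochastic-dominance step is precisely where the Poisson structure (as opposed to a general winners-take-all pool, for which the paper notes the theorem fails) enters the argument.
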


\begin{proof}
The expected payoff for agent $1$ when he chooses process $j$ and agent 2 chooses process $k$ will be $-1 + 0*\text{prob}(Y_j < Y_k) + 1*\text{prob}(Y_j = Y_k) + 2*\text{prob}(Y_j > Y_k)$

Then, if we consider an arbitrary strategy $s_1(1), \ldots, s_1(m)$ for agent $1$, we have that the expected payoff for agent $1$ when agent $2$ chooses process $k$ is

\begin{align}
 & \sum_{j=1}^m s_1(j)  (-1 + 0*\text{prob}(Y_j < Y_k) + 1*\text{prob}(Y_j = Y_k) + 2*\text{prob}(Y_j > Y_k) ) \\
 = & - 1 +  \sum_{j=1}^m  s_1(j)  (\text{prob}(Y_j = Y_k) + 2*\text{prob}(Y_j > Y_k))  \label{eq:secondline} \\
 = & - 1 + s_1(k) +  \sum_{\substack{j=1 \\ j\neq k }}^m  s_1(j)  (\text{prob}(Y_j = Y_k) + 2*\text{prob}(Y_j > Y_k)) \\
 = & - 1 + s_1(k) +  \sum_{\substack{j=1 \\ j\neq k }}^m  s_1(j)  (\text{prob}(Y_j = Y_k) + \text{prob}(Y_j > Y_k) + (1  - \text{prob}(Y_j = Y_k)  \nonumber   \\
  &  - \text{prob}(Y_j < Y_k) )) \\
 = &  \sum_{\substack{j=1 \\ j\neq k }}^m  s_1(j)  ( \text{prob}(Y_j > Y_k)  - \text{prob}(Y_j < Y_k) ))  \\
  = &  \sum_{j=1}^m  s_1(j)  ( \text{prob}(Y_j > Y_k)  - \text{prob}(Y_j < Y_k) )) 
 \label{eq:final}
\end{align}

It is clear that no matter the strategy of agent $1$, the value in \eqref{eq:final} is minimized by agent $2$  choosing $k=1$, or another value of $k$ corresponding to a process with the same rate. By the zero-sum character of the game and the linearity of expected values, it follows that the best responses for agent $2$ are the mixed combinations of these strategies, independently of the strategy of agent $1$. By symmetry, the same analysis applies in the other direction. This proves our claim.

\end{proof}

Note that Theorem \ref{th:twoagents} implies that it is a symmetric Nash equilibrium for all agents to select process $1$ with probability $1$. However, any pair of strategies with support on the processes with $\lambda_j = \lambda_1$ represent a Nash equilibrium, so when there are several process with the highest rate there can be non-symmetric Nash equilibria as well.

Note also that the proof of Theorem \ref{th:twoagents} does not extend beyond Poisson-picking pools to the case of general winners-take-all pools (replacing in that extension the processes with the highest rate by the integers with the highest chance of being in the first set of the partition). This is because in the distribution of ordered partitions from which we sample there could be correlations between the position in the partition of different integers. To see a concrete example, consider the winners-take-all pool with $n=2$ and $m=5$ where the partition is drawn at random from the four options $(\{5\}\{1\}\{2\}\{4\}\{3\},\{5\}\{4\}\{3\}\{1\}\{2\},$ $\{1\}\{2\}\{4\}\{3\}\{5\},\{4\}\{3\}\{1\}\{2\}\{5\})$. Then, the element with the highest chance of being in the first set of the partition is $5$, but the best response to agent $1$ selecting $3$ is for agent $2$ to select $4$, and similarly, the best response to agent $1$ selecting $2$ is for agent $2$ to select $1$.

\subsubsection{Only two processes, at least three agents}

\label{sec:twoprocesses}

We look now at the case where the parameters of the game are $(n, 2, Y_1(\lambda_1), Y_2(\lambda_2))$, and $n \geq 3$. Our main result in this section is the following:

\begin{theorem}
\label{th:twoprocesses}
Consider any Poisson-picking pool with parameters $(n, 2, Y_1(\lambda_1), Y_2(\lambda_2))$, and $n \geq 3$. Let $c$ be $\dfrac{\text{prob}(Y_1 > Y_2)}{\text{prob}(Y_1 < Y_2)}$. When $c \leq \dfrac{1}{n-1}$ or $ c\geq (n-1)$, it is the only symmetric Nash equilibrium to select the process with the highest rate. When $\dfrac{1}{n-1} < c < n-1$, all the symmetric Nash equilibria will have to satisfy the condition of being a root with odd multiplicity of the polynomial in $s(1)$ given by $\sum_{k=1} ^{n-2} {n -1 \choose k}  s(1)^{k} (1 - s(1))^{n-1-k} 
 \(  \frac{c*n}{k + 1}  -   \frac{n}{n -k}  \nonumber  \) + (1-s(1))^{n-1} (c*(n-1)  - 1 )  + (s(1))^{n-1} ( c - (n-1)   ).$

\end{theorem}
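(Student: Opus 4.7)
The plan is to equate the expected payoffs of the two pure deviations from a symmetric profile and show that the resulting indifference condition is exactly $P(s)=0$ with $P$ being the polynomial of the theorem, then pin down the roots of $P$ in $[0,1]$ using the Bernstein structure. Let each of the other $n-1$ agents independently play process $1$ with probability $s := s(1)$, and let $K \sim \mathrm{Binomial}(n-1,s)$. A symmetric strategy $s$ is a Nash equilibrium iff $P(s) = 0$ when $0 < s < 1$, $P(1) \geq 0$ when $s=1$, and $P(0) \leq 0$ when $s=0$.

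To identify $P$, I would compute the expected payoff $E_1(s)$ of a tagged agent playing process $1$ by conditioning on $K$ and on the three events $\{Y_1 > Y_2\}$, $\{Y_1 = Y_2\}$, $\{Y_1 < Y_2\}$, with probabilities $p_>, p_=, p_<$: the payoffs are $n/(K+1)-1$, $0$, and $-1$ respectively, with the boundary exception that on $\{Y_1 < Y_2\}$ the payoff is $0$ when $K = n-1$ (nobody chose process $2$, so the winners default to process $1$). Symmetrically for $E_2(s)$, with the boundary exception at $K = 0$. Subtracting, factoring out $p_<$, and isolating the $k=0$ and $k=n-1$ binomial terms from the resulting sum so that they merge with the correction terms $-c(1-s)^{n-1}$ and $s^{n-1}$ yields precisely $E_1(s) - E_2(s) = p_< P(s)$ with $P$ the polynomial of the theorem.

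Now view $P(s) = \sum_{k=0}^{n-1} b_k \binom{n-1}{k} s^k (1-s)^{n-1-k}$ as a polynomial in the Bernstein basis with coefficients $b_0 = c(n-1)-1$, $b_{n-1} = c-(n-1)$, and $b_k = \tfrac{cn}{k+1} - \tfrac{n}{n-k}$ for $1 \leq k \leq n-2$. A direct check (using $n \geq 3$) shows $b_k - b_{k+1} > 0$ for every $k$, so the sequence is strictly decreasing and has at most one sign change. The variation-diminishing property of the Bernstein basis then bounds the number of roots of $P$ in $(0,1)$, counted with multiplicity, by the number of sign changes in $(b_k)$, hence by one. This resolves all three cases: if $c \geq n-1$ then $b_{n-1} \geq 0$ forces every $b_k > 0$, so $P > 0$ on $[0,1)$ and $s=1$ is the unique symmetric NE; if $c \leq 1/(n-1)$ the mirror argument yields the unique NE at $s=0$; if $1/(n-1) < c < n-1$ then $P(0) > 0 > P(1)$, so by the intermediate value theorem there is at least one root of $P$ in $(0,1)$, and the sign-change bound forces it to be simple, hence of odd multiplicity.

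The main obstacle I anticipate is the algebra in the polynomial identification: the two asymmetric boundary cases ($K = n-1$ for process $1$, $K = 0$ for process $2$) must combine correctly with the $k=0$ and $k=n-1$ terms of the binomial sum so that the coefficients of $(1-s)^{n-1}$ and $s^{n-1}$ come out to the non-obvious values $c(n-1)-1$ and $c-(n-1)$ rather than the ``naive Bernstein'' values $cn-1$ and $c-n$. Once this identification is in hand, the strict monotonicity of $(b_k)$, the variation-diminishing bound, and the IVT close out the remaining case analysis routinely.
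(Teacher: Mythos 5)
Your proof is correct, and it reaches the paper's polynomial by a genuinely different and ultimately stronger route. The identification step is sound: since each agent's payoff is affine in their own mixing probability, your difference $E_1(s)-E_2(s)$ is exactly the derivative $dP_i/ds_i(1)$ that the paper computes by writing the full expected payoff as a sum over profiles $w\in\{1,2\}^n$ and differentiating, and your handling of the two degenerate profiles (everyone on the same process, where the pool is simply refunded) produces precisely the non-naive boundary coefficients $c(n-1)-1$ and $c-(n-1)$; for $1\le k\le n-2$ the cross terms $-p_<$ and $+p_>$ cancel to leave $\frac{cn}{k+1}-\frac{n}{n-k}$ after dividing by $p_<$, matching the theorem. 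The real divergence is in the endgame. The paper disposes of the extreme cases by the same sign observation you make (all Bernstein coefficients of one sign), but in the regime $\frac{1}{n-1}<c<n-1$ it stops at the first-order condition and appeals to ``further derivatives'' to get odd multiplicity --- an appeal that is in fact unnecessary, since for a two-action game indifference at an interior point already characterizes a symmetric equilibrium. Your Bernstein argument goes further: the coefficient sequence is strictly decreasing (the interior monotonicity is immediate, and the two junctions should be checked explicitly, namely $b_0-b_1=\frac{c(n-2)}{2}+\frac{1}{n-1}>0$ and $b_{n-2}-b_{n-1}=\frac{c}{n-1}+\frac{n-2}{2}>0$ for $n\ge 3$), hence has at most one sign change, so by the variation-diminishing bound (Descartes' rule of signs after the substitution $s=t/(1+t)$, which you should state or cite explicitly) $P$ has at most one root in $(0,1)$ counted with multiplicity, while $P(0)=b_0>0>b_{n-1}=P(1)$ gives at least one. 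So the root is unique and simple, which yields the theorem's odd-multiplicity claim trivially and in addition proves the uniqueness of the interior symmetric equilibrium --- a fact the paper only conjectures in general and verifies by explicit root-finding for $n=3$ and $n=4$.
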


\begin{proof}

For each agent $i$, their strategy is uniquely encoded by the value of $s_i(1)$, which can move freely between $0$ and $1$. We have then that the probability that the choices of the agents are given by the string $w \in \{1,2\}^n$ is given by

\begin{align}
\label{probformula}
  \prod_{k=1}^n s_k(w_k)  =    \prod_{k=1}^n s_k(1)^{2-w_k} * (1 - s_k(1))^{w_k - 1}.
\end{align}

\noindent Also, denote by $n_x(w)$ the number of symbols equal to $x$ in a string $w$. Using this notation, as well as the formula in (\ref{probformula}), we have that for a fixed choice of strategies from each agent, the expected payoff $P_i$ for agent $i$ will be equal to

\begin{align*}
P_i & = \sum_{w \in \{1,2\}^n - \{1^n, 2^n\}}  \(  \prod_{k=1}^n s_k(1)^{2-w_k} * (1 - s_k(1))^{w_k - 1}  \) ( \text{prob}(Y_1 < Y_2)*\frac{n}{n_2(w)}*(w_i - 1) \\
& ~~  + \text{prob}(Y_1 = Y_2)*1 +  \text{prob}(Y_1 > Y_2)*\frac{n}{n_1(w)}*(2-w_i) -1 ).
\end{align*}

\noindent  This equation shows that whenever both processes are selected by at least someone, the pool will be distributed between those who select a process with the highest rate of the events. On the other hand, when everyone selects the same process, everyone just gets their money back. The derivative of $P_i$ with respect to the parameter $s_i(1)$  will be then  

\begin{align*}
\frac{d  P_i}{d s_i(1)} & = \frac{d}{d s_i(1)}  \sum_{w \in \{1,2\}^n - \{1^n,2^n\}}  \(  \prod_{\substack{k=1 \\ k \neq i } }^n s_k(1)^{2-w_k} * (1 - s_k(1))^{w_k - 1}  \)  * \\
& ~~ \( (2-w_i) s_i(1)^{1 - w_i} (1 - s_i(1))^{w_i - 1}  -   (w_i- 1) s_i(1)^{2-w_i} (1 - s_i(1))^{w_i - 2}  \) * \\
&  ~~  ( -1  +  \text{prob}(Y_1 < Y_2)*\frac{n}{n_2(w)}*(w_i - 1)  + \text{prob}(Y_1 = Y_2)*1 + \\
& ~~  \text{prob}(Y_1 > Y_2)*\frac{n}{n_1(w)}*(2-w_i) ).
\end{align*}

\noindent If we now only concern ourselves with symmetric equilibria, that means for any agent $j$ we can write the parameter $s_j(1)$ as $s(1)$. We can then write 

\begin{align*}
\frac{d  P_i}{d s_i(1)} & = \frac{d  P_i}{d s_i(1)} \sum_{w \in \{1,2\}^n - \{1^n,2^n\} }  \(  \prod_{\substack{k=1 \\ k \neq i } }^n s(1)^{2-w_k} * (1 - s(1))^{w_k - 1}  \)  * \\
& ~~ \( (2-w_i) s(1)^{1 - w_i} (1 - s(1))^{w_i - 1}  -   (w_i- 1) s(1)^{2-w_i} (1 - s(1))^{w_i - 2}  \) * \\
&  ~~  ( -1  +  \text{prob}(Y_1 < Y_2)*\frac{n}{n_2(w)}*(w_i - 1)  + \text{prob}(Y_1 = Y_2)*1 +\\
& ~~   \text{prob}(Y_1 > Y_2)*\frac{n}{n_1(w)}*(2-w_i) ) \\
\end{align*}

\noindent Now, if we consider both of the cases $w_i=1$ and $w_i=2$, we can verify that the contribution from taking derivatives in the second line will be equal to $1$ whenever $w_i=1$, and to $-1$ whenever $w_i=2$. If we take this into account, and split the outer sum into two different sums for the cases $w_i=1$ and $w_i=2$, we obtain the formula

\begin{align*}
\frac{d  P_i}{d s_i(1)} & = \sum_{w^{-i} \in \{1,2\}^{n-1} -\{1^{n-1},2^{n-1}\}}  s(1)^{n_1(w^{-i})} (1 - s(1))^{n_2(w^{-i})} * \\
&  ~~  \(   \text{prob}(Y_1 > Y_2)*\frac{n}{n_1(w^{-i}) + 1}  -   \text{prob}(Y_1 < Y_2)*\frac{n}{n_2(w^{-i}) + 1}  
\) \\
&  ~~+ (1-s(1))^{n-1} (-1 + \text{prob}(Y_1 > Y_2)*n +  \text{prob}(Y_1 = Y_2)*1 ) \\
&  ~~ - (s(1))^{n-1} (-1 + \text{prob}(Y_1 < Y_2)*n +  \text{prob}(Y_1 = Y_2)*1 ).
\end{align*}

\noindent Putting together all the terms corresponding to the same value of $n_1(w^{-i})= n - 1 - n_2(w^{-i}) $, and using that $ \text{prob}(Y_1 = Y_2) = 1 - \text{prob}(Y_1 < Y_2)  - \text{prob}(Y_1 > Y_2)$, we obtain that $\frac{d  P_i}{d s_i(1)} $  is given by

\begin{align} \label{eq:eqyatuib}
& \sum_{k=1} ^{n-2} {n -1 \choose k}  s(1)^{k} (1 - s(1))^{n-1-k} *  \nonumber \\ 
&   \(   \text{prob}(Y_1 > Y_2)*\frac{n}{k + 1}  -   \text{prob}(Y_1 < Y_2)*\frac{n}{n -k}  \nonumber  \)\\
&  + (1-s(1))^{n-1} ( \text{prob}(Y_1 > Y_2)*(n-1)  -  \text{prob}(Y_1 < Y_2) ) \nonumber \\
&   + (s(1))^{n-1} (  \text{prob}(Y_1 > Y_2)  - \text{prob}(Y_1 < Y_2)*(n-1)   ),
\end{align}

\noindent which is a polynomial of degree at most $n-1$ in $s(1)$.

Whenever $\text{prob}(Y_1 > Y_2)   \geq \text{prob}(Y_1 < Y_2)*(n-1)$, the results in Section \ref{sec:adjusting} mean that it will be a symmetric Nash equilibrium for everyone to select the first process. Indeed, if we examine \eqref{eq:eqyatuib}, we can see that whenever $\text{prob}(Y_1 > Y_2)  \geq \text{prob}(Y_1 < Y_2)*(n-1)$, and $s(1) < 1$, the value of \eqref{eq:eqyatuib} is greater than zero, since all the involved linear combinations of $ \text{prob}(Y_1 < Y_2)$ and $ \text{prob}(Y_1 > Y_2)$ are non-negative, and the one in the second-last line is certain to be positive. This means that for any symmetric strategy with $s(1) < 1$, it is the best response from agent $i$ to have $s_i(1)=1$. Therefore the only symmetric Nash equilibrium will be the one corresponding to $s(1)=1$. The same argument applies in the reverse case  $\text{prob}(Y_2 > Y_1)   \geq \text{prob}(Y_2 < Y_1)*(n-1)$.

It remains to consider the case where $ \text{prob}(Y_1 < Y_2)/(n-1) <  \text{prob}(Y_1 > Y_2) < \text{prob}(Y_1 < Y_2)*(n-1)$. Then, the analysis in Section \ref{sec:adjusting} shows that it will not be a symmetric Nash equilibrium for everyone to select the same process. We have then that the symmetric Nash equilibria (known to exist) will correspond to values of $s(1)$ in the interval $(0,1)$. Then, the fact that such an strategy has to be a local response to itself means that the partial derivative in \eqref{eq:eqyatuib}  has to be equal to zero if we plug in that value of $s(1)$. Taking into account the study of further derivatives of the payoff function, we have that not only does $s(1)$ need to be a root of  \eqref{eq:eqyatuib}, but also a root with odd multiplicity. Note that the polynomial  in   \eqref{eq:eqyatuib} is equivalent to the one in the statement of Theorem \ref{th:twoprocesses}, which is obtained through a division by $ \text{prob}(Y_2 < Y_1)$.

\end{proof}

We can look at the two small cases $n=3$ and $n=4$ to get a better understanding of what Theorem  \ref{th:twoprocesses} implies. When $n=3$, we obtain the following result:
 
\begin{corollary}
\label{col:3agents}
Consider any Poisson-picking pool with parameters $(3, 2, Y_1(\lambda_1), Y_2(\lambda_2))$. Let $c$ be $\dfrac{\text{prob}(Y_1 > Y_2)}{\text{prob}(Y_1 < Y_2)}$. When $c \leq \dfrac{1}{2}$ or $ c\geq 2$, it is the only symmetric Nash equilibrium for everyone to always select the process with the highest rate. When $\dfrac{1}{2} < c < 2$, the only symmetric Nash equilibrium will correspond to $s(1)= \dfrac{2c-1}{c+1}$.
\end{corollary}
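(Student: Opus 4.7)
The plan is to specialize Theorem \ref{th:twoprocesses} to $n=3$ and analyze the resulting polynomial. The extreme regimes are immediate: for $c \leq 1/2$ or $c \geq 2$, Theorem \ref{th:twoprocesses} directly asserts that the unique symmetric Nash equilibrium is for every agent to always pick the highest-rate process, so no further work is needed for those cases.

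For the middle regime $1/2 < c < 2$, I would substitute $n=3$ into the polynomial appearing in Theorem \ref{th:twoprocesses}. The outer sum $\sum_{k=1}^{n-2}$ collapses to the single term $k=1$, contributing $\binom{2}{1} s(1)(1-s(1))\left(\dfrac{3c}{2} - \dfrac{3}{2}\right) = 3(c-1)\, s(1)(1-s(1))$. Adding the two boundary terms $(1-s(1))^2 (2c-1)$ and $s(1)^2 (c-2)$ and expanding in $s(1)$, the key observation is that the coefficient of $s(1)^2$ equals $-3(c-1) + (2c-1) + (c-2) = 0$. The polynomial therefore reduces to the linear expression $(2c-1) - (c+1)\, s(1)$.

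Solving $(2c-1) - (c+1)\, s(1) = 0$ yields the unique root $s(1) = \dfrac{2c-1}{c+1}$. I would then verify that this root lies in the open interval $(0,1)$ precisely under the assumption $1/2 < c < 2$: positivity is equivalent to $c > 1/2$, and the condition $s(1) < 1$ rearranges to $2c - 1 < c + 1$, i.e.\ $c < 2$. Since the polynomial is linear, this root has multiplicity one, hence odd, so Theorem \ref{th:twoprocesses} identifies it as the unique candidate symmetric Nash equilibrium in this parameter range. Combined with the general existence of a symmetric Nash equilibrium for symmetric games (already invoked in the paper), this confirms that it is indeed the unique symmetric Nash equilibrium.

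The only obstacle worth flagging is the algebraic verification that the $s(1)^2$ coefficient vanishes; this is a routine but essential computation, because it is what collapses a potentially quadratic condition into a linear one and thus produces the clean closed form $s(1) = (2c-1)/(c+1)$ rather than a messier expression coming from the quadratic formula.
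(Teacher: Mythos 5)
Your proposal is correct and follows essentially the same route as the paper: specialize the polynomial of Theorem \ref{th:twoprocesses} to $n=3$, observe that the quadratic coefficient cancels so the condition is linear, and read off the root $\frac{2c-1}{c+1}$ (the paper performs this specialization via mathematical software rather than by hand, but the computation and conclusion are identical). Your additional check that the root lies in $(0,1)$ exactly when $\frac{1}{2}<c<2$ is a worthwhile detail the paper leaves implicit.
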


\begin{proof}

\noindent  This corollary follows from considering Theorem \ref{th:twoprocesses}, and obtaining through mathematical software that for $n=3$ the value of \eqref{eq:eqyatuib} is given by

\begin{align*}
s(1)*(- \text{prob}(Y_1 > Y_2) - \text{prob}(Y_1 < Y_2)) + \text{prob}(Y_1 > Y_2)*2 - \text{prob}(Y_1 < Y_2).
\end{align*}

\noindent This is a linear equation in $s(1)$, whose only root is given by 

\begin{align*}
s(1) = \dfrac{ \text{prob}(Y_1 > Y_2)*2 - \text{prob}(Y_1 < Y_2)} {\text{prob}(Y_1 > Y_2) + \text{prob}(Y_1 < Y_2)} = \dfrac{2c-1}{c+1}.
\end{align*}

\end{proof}

\begin{corollary}

Consider any Poisson-picking pool with parameters $(4, 2, Y_1(\lambda_1), Y_2(\lambda_2))$. Let $c$ be $\dfrac{\text{prob}(Y_1 > Y_2)}{\text{prob}(Y_1 < Y_2)}$. When $c \leq \dfrac{1}{3}$ or $ c\geq 3$, it is the only symmetric Nash equilibrium for everyone to always select the process with the highest rate. When $\dfrac{1}{3} < c < 3$, the only symmetric Nash equilibrium will correspond to  $s(1) = \dfrac{1}{2}$ whenever $c=1$, and otherwise to  $s(1)~=~\dfrac{(3c + 1) -  \sqrt{ (3c + 1)^2 - 4(c-1)(3c-1)}}{2(c-1)}$.

\end{corollary}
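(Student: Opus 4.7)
The plan is to apply Theorem~\ref{th:twoprocesses} directly with $n = 4$. The boundary cases $c \leq \tfrac{1}{3}$ and $c \geq 3$ are handled by the first sentence of that theorem with no further work, so the entire content of the corollary is the analysis of \eqref{eq:eqyatuib} in the interior range $\tfrac{1}{3} < c < 3$. First I would substitute $n = 4$ into \eqref{eq:eqyatuib}, expand the two surviving $k=1,2$ terms together with the $(1-s(1))^{3}$ and $s(1)^{3}$ terms, collect like powers of $s(1)$, and divide through by $\text{prob}(Y_1 < Y_2)$ to rewrite everything in terms of $c$. I expect the $s(1)^3$ coefficient to cancel (otherwise the clean quadratic answer stated in the corollary could not arise), leaving the quadratic $(c-1)\,s(1)^2 - (3c+1)\,s(1) + (3c-1)$.

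Once this quadratic is in hand, the degenerate case $c = 1$ reduces to the linear equation $-4\,s(1) + 2 = 0$, yielding $s(1) = \tfrac{1}{2}$ at once. For $c \neq 1$ the quadratic formula produces the two candidate roots $\tfrac{(3c+1) \pm \sqrt{(3c+1)^2 - 4(c-1)(3c-1)}}{2(c-1)}$, and the remaining work is to check that exactly one of them lies in $(0,1)$ and that it corresponds to the minus sign. Evaluating the quadratic at the endpoints gives $3c-1 > 0$ at $s(1)=0$ and $c-3 < 0$ at $s(1)=1$, so by the intermediate value theorem at least one root lies in $(0,1)$; and since the polynomial is quadratic with opposite signs at the endpoints, exactly one root lies there. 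Because that root is simple, it is automatically of odd multiplicity, satisfying the criterion from Theorem~\ref{th:twoprocesses}.

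The main obstacle is justifying the minus sign, since the sign of the denominator $2(c-1)$ flips at $c=1$. I would treat the two subcases separately. For $1 < c < 3$ the leading coefficient is positive, the parabola opens upward with positive value at $s(1) = 0$ and negative value at $s(1) = 1$, so the unique root in $(0,1)$ is the smaller of the two, which corresponds to the minus sign. For $\tfrac{1}{3} < c < 1$ one checks that $-4(c-1)(3c-1) > 0$, so the discriminant strictly exceeds $(3c+1)^2$; the minus-sign numerator is then negative, and dividing by the negative denominator $2(c-1)$ again yields the root in $(0,1)$, while the plus sign gives a negative value. In both subcases the formula from the statement is recovered, completing the corollary.
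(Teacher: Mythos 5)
Your proposal is correct and follows essentially the same route as the paper: substitute $n=4$ into \eqref{eq:eqyatuib}, observe the cubic term cancels to leave $(c-1)s(1)^2-(3c+1)s(1)+(3c-1)$, and then identify the unique admissible root in the two subcases $c<1$ and $c>1$. Your endpoint evaluation ($3c-1>0$ at $s(1)=0$ and $c-3<0$ at $s(1)=1$) is a slightly tidier way to establish that exactly one simple root lies in $(0,1)$ than the paper's direct comparison of the two roots, but the argument is otherwise the same.
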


\begin{proof}

We proceed similarly to the $n=3$ case, and obtain using mathematical software that the value of \eqref{eq:eqyatuib} is given by

\begin{align*}
& s(1)^2(\text{prob}(Y_1 >  Y_2) - \text{prob}(Y_1 < Y_2)) + s(1) * \(-\text{prob}(Y_1 > Y_2)*3 - \text{prob}(Y_1 < Y_2) \) \\  
+ &  ( \text{prob}(Y_1 > Y_2) *3- \text{prob}(Y_1 < Y_2)).
\end{align*}

\noindent Dividing by $\text{prob}(Y_1 < Y_2)$, we obtain a polynomial in $s(1)$ whose coefficients are a function of $c$. This polynomial is given by

\begin{align}
\label{eq:quadratic}
s(1)^2(c -1) + s(1) * \(-3c - 1 \)  + (3c - 1).
\end{align}

\noindent If we analyze the roots of this polynomial, we have that:

\begin{itemize}

\item If $c=1$, the only root is given by $s(1)=\dfrac{1}{2}$, as one might expect due to the symmetry of the situation.

\item If $\dfrac{1}{3} < c < 1$, the roots of the equation  are given by

\begin{align}
\frac{(3c + 1) \pm \sqrt{ (-3c -1)^2 - 4(c-1)(3c-1)}}{2(c-1)}.  \label{eq:thethingtwo}
\end{align}

The only one of these that is positive is

\[
\frac{\sqrt{ (3c + 1)^2 + 4(1-c)(3c-1)} - (3c+1)}{2(1 - c)},
\]

\noindent which will have to correspond then with the known to exist symmetric Nash equilibrium. 

\item If $1 < c < 3$, the roots of \eqref{eq:quadratic} are given again by \eqref{eq:thethingtwo}. However, this time, since $3c + 1 > 2c - 2$ and both values are positive, only one of the roots will be less than $1$, and it will again be the one given by

\[
\frac{(3c + 1) -  \sqrt{ (3c + 1)^2 - 4(c-1)(3c-1)}}{2(c-1)}
\]

\end{itemize}

\end{proof}

\begin{figure}[!htpb]
  \begin{center}
 	\includegraphics[height=7.25cm]{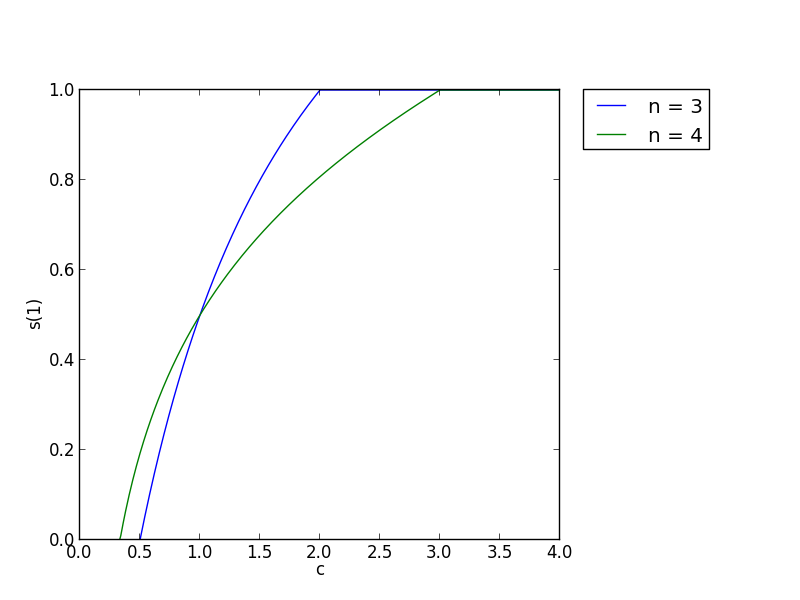}
  \end{center}
  \caption{Values for  $n=3$ and $n=4$ of $s(1)$ (the probability of choosing the first process in the symmetric Nash equilibrium) as a function of $c=\dfrac{\text{prob}(Y_1 > Y_2)}{\text{prob}(Y_1 < Y_2)}$  (the functions are equal to $1$ for for $c \geq n-1$). \label{fig:diagrams}}

\end{figure}

Motivated by what we observed in the cases with $3$ agents and $4$ agents, we make now the following conjecture:

\begin{conjecture}

Consider any Poisson-picking pool with parameters $(n, 2, Y_1(\lambda_1), Y_2(\lambda_2))$. Let $c$ be $\dfrac{\text{prob}(Y_1 > Y_2)}{\text{prob}(Y_1 < Y_2)}$. When $c \leq \dfrac{1}{n-1}$ or $ c\geq (n-1)$, it is the only symmetric Nash equilibrium for everyone to always select the process with the highest rate. When $\dfrac{1}{n-1} < c < (n-1)$, there will be only one symmetric Nash equilibrium,  with the corresponding value of $s(1)$ being a  continuous and monotonically increasing function of $c$.
\end{conjecture}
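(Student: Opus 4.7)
The plan is to recast the polynomial appearing in Theorem \ref{th:twoprocesses} into a transparent closed form from which uniqueness, continuity, and monotonicity all follow with essentially no further work.

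Writing $s$ for the symmetric probability $s(1)$, I would define $f(s,c)$ to be the polynomial (\ref{eq:eqyatuib}) divided by $\text{prob}(Y_1 < Y_2)$, and then collapse it to
\[
f(s, c) = c \sum_{j=0}^{n-2}(1-s)^j - \sum_{j=0}^{n-2} s^j.
\]
The argument for this reduction proceeds in four steps: (i) split $f(s,c) = c\, g(s) - h(s)$ where $g$ collects the $c$-terms and $h$ the constant terms; (ii) use $\tfrac{n}{k+1}\binom{n-1}{k} = \binom{n}{k+1}$ and the reindexing $j=k+1$ to recognize $\sum_{k=0}^{n-1}\binom{n}{k+1} s^k (1-s)^{n-1-k} = (1-(1-s)^n)/s$; (iii) account for the fact that the $k=0$ coefficient in $g$ is $n-1$ instead of the ``ideal'' value $n$, obtaining $g(s) = (1-(1-s)^n)/s - (1-s)^{n-1}$, which telescopes via $1-(1-s)^n - s(1-s)^{n-1} = 1-(1-s)^{n-1}$ to give $g(s) = \sum_{j=0}^{n-2}(1-s)^j$; (iv) exploit the symmetry $h(s) = g(1-s)$ visible in (\ref{eq:eqyatuib}) to conclude $h(s) = \sum_{j=0}^{n-2} s^j$.

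Once $f$ is in this form, the rest is immediate. Since $\sum_{j=0}^{n-2}(1-s)^j$ is strictly decreasing and $\sum_{j=0}^{n-2} s^j$ is strictly increasing on $[0,1]$, one has $\partial_s f < 0$ there for any $c > 0$, and $\partial_c f = g(s) > 0$. Combined with $f(0,c) = c(n-1)-1$ and $f(1,c) = c-(n-1)$, for each $c \in (\tfrac{1}{n-1}, n-1)$ the intermediate value theorem together with strict monotonicity in $s$ yields exactly one root $s^*(c) \in (0,1)$, which is automatically simple and hence of odd multiplicity. By Theorem \ref{th:twoprocesses}, together with the observation that in this regime $f(0,c) > 0$ and $f(1,c) < 0$ prevent $s=0$ and $s=1$ from being symmetric equilibria, $s^*(c)$ is the unique symmetric Nash equilibrium. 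The implicit function theorem applied to $f(s^*(c),c)=0$ gives $(s^*)'(c) = -\partial_c f / \partial_s f > 0$, yielding smoothness and strict monotonic increase. The boundary cases $c \leq \tfrac{1}{n-1}$ and $c \geq n-1$ are already handled by Theorem \ref{th:twoprocesses} (and in any event follow directly: the sign of $f$ is constant on $[0,1]$ there, pinning the equilibrium at the appropriate endpoint).

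The main obstacle is the algebraic collapse in the second paragraph. The expression in (\ref{eq:eqyatuib}) is a weighted sum of Bernstein basis polynomials with coefficients $cn/(k+1) - n/(n-k)$ that change sign in a $c$- and $k$-dependent way, and it is not at all obvious a priori that it should admit such a neat closed form; recognizing the $\binom{n}{k+1}$ identity, and the telescoping that turns $(1-(1-s)^n)/s - (1-s)^{n-1}$ into $(1-(1-s)^{n-1})/s$, is what unlocks everything. After that step, uniqueness, monotonicity, and continuity follow by elementary calculus.
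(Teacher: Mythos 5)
The statement you were asked to prove is stated in the paper only as a conjecture, with no proof offered; the authors merely sketch two possible lines of attack (continuity of polynomial roots, or the observation that each $s(1)\in(0,1)$ determines a unique $c$) and note obstacles to the first. Your argument actually resolves the conjecture, and I have checked that it is correct. The crucial algebraic collapse holds: using $\frac{n}{k+1}\binom{n-1}{k}=\binom{n}{k+1}$, the $c$-part of \eqref{eq:eqyatuib} (after dividing by $\mathrm{prob}(Y_1<Y_2)$) equals $\frac{1-(1-s)^n}{s}-(1-s)^{n-1}=\frac{1-(1-s)^{n-1}}{s}=\sum_{j=0}^{n-2}(1-s)^j$, and the substitution $k\mapsto n-1-k$ confirms $h(s)=g(1-s)$, so that
\[
f(s,c)=c\sum_{j=0}^{n-2}(1-s)^j-\sum_{j=0}^{n-2}s^j .
\]
This reproduces exactly the paper's displayed polynomials for $n=3$ (namely $s(-c-1)+2c-1$) and $n=4$ (namely $s^2(c-1)+s(-3c-1)+(3c-1)$), which is a good independent check. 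From there your calculus is sound: $\partial_s f<0$ on $[0,1]$, $f(0,c)=c(n-1)-1>0$ and $f(1,c)=c-(n-1)<0$ in the middle regime, so there is exactly one root, it is simple (hence of odd multiplicity), and since the payoff is affine in the agent's own mixing probability, a zero of \eqref{eq:eqyatuib} in $(0,1)$ is precisely the indifference condition, so this root is the unique symmetric equilibrium; the implicit function theorem with $\partial_c f=g(s)>0$ gives continuity and strict increase. Your closed form also delivers, as a bonus, the explicit inverse $c(s)=\bigl(\sum_{j=0}^{n-2}s^j\bigr)/\bigl(\sum_{j=0}^{n-2}(1-s)^j\bigr)$, a strictly increasing bijection from $[0,1]$ onto $[\frac{1}{n-1},\,n-1]$ --- which is exactly the "one value of $c$ per value of $s(1)$" fact the paper gestures at, here made effective and supplying the closed formula for the equilibrium that the paper requests as further work.
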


\noindent This conjecture represents the intuitive idea that as the process with the lower rate becomes more and more likely to be above the process with the higher rate, it makes more sense to put weight into it. It might be possible to prove the continuity part of it using the continuity of the roots of \eqref{eq:eqyatuib} as as function of $\text{prob}(Y_2 < Y_1)$ and $\text{prob}(Y_1 > Y_2)$. However, a naive argument in this direction runs into the issue that the roots of the corresponding polynomial can be complex. One could also use the fact that for every possible value of $s(1) \in (0, 1)$, there is exactly one value of $\dfrac{\text{prob}(Y_1 > Y_2)}{\text{prob}(Y_1 < Y_2)}$ that makes \eqref{eq:eqyatuib} equal to zero. This fact follows from fixing the value of $s(1)$ in \eqref{eq:eqyatuib}, and then looking at \eqref{eq:eqyatuib} as a linear expression in terms of $\text{prob}(Y_1 > Y_2)$ and $\text{prob}(Y_1 < Y_2)$.

Note that as in Section \ref{sec:adjusting}, the proof of Theorem \ref{th:twoprocesses} and its corollaries extends to the case of general winners-take-all pools, by looking at what the sign of $Y_2 - Y_1$ represents in terms of the partitions in Definition \ref{def:general}. One obtains then the following theorem:

\begin{theorem}
Consider any winners-take-all pool with parameters $(n, 2, X)$, and $n \geq 3$. Let $c$ be $\dfrac{\text{prob}(X =\{1\}\{2\})}{\text{prob}(X = \{2\}\{1\})}$. When $c \leq \dfrac{1}{n-1}$ or $ c\geq (n-1)$, it is the only symmetric Nash equilibrium to select the option with a higher chance of being in the first set of the partition. When $\dfrac{1}{n-1} < c < n-1$, all the symmetric Nash equilibria will have to satisfy the condition of being a root with odd multiplicity of the polynomial in $s(1)$ given by $\sum_{k=1} ^{n-2} {n -1 \choose k}  s(1)^{k} (1 - s(1))^{n-1-k} 
 \(  \frac{c*n}{k + 1}  -   \frac{n}{n -k}  \nonumber  \) + (1-s(1))^{n-1} (c*(n-1)  - 1 )  + (s(1))^{n-1} ( c - (n-1)   ).$
\end{theorem}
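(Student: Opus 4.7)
The plan is to mirror the proof of Theorem~\ref{th:twoprocesses} via the translation dictionary $\text{prob}(Y_1 > Y_2) \leftrightarrow \text{prob}(X = \{1\}\{2\})$, $\text{prob}(Y_1 < Y_2) \leftrightarrow \text{prob}(X = \{2\}\{1\})$, and $\text{prob}(Y_1 = Y_2) \leftrightarrow \text{prob}(X = \{1,2\})$. What makes the translation valid is that with $m=2$ these three partitions exhaust the possible values of $X$, and they induce the same per-profile payoffs as in the Poisson-picking setting: under $\{1\}\{2\}$ the choosers of option $1$ split the pool, under $\{2\}\{1\}$ the choosers of option $2$ do, and under $\{1,2\}$ everyone who bet on a chosen option (so, everyone) simply gets their money back.

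Working within this dictionary, I would first rewrite the expected payoff $P_i$ for agent $i$ under a symmetric profile with shared parameter $s(1)$ among the other $n-1$ agents. This gives the same expression as in Theorem~\ref{th:twoprocesses} but with the partition probabilities above replacing the Poisson comparison probabilities. I would then differentiate in $s_i(1)$ and split the resulting sum into the cases $w_i=1$ and $w_i=2$. The tie partition contributes $\text{prob}(X=\{1,2\}) \cdot 1$ to the payoff in either case, so after substituting $\text{prob}(X=\{1,2\}) = 1 - \text{prob}(X=\{1\}\{2\}) - \text{prob}(X=\{2\}\{1\})$ and regrouping by the value of $n_1(w^{-i})$, the derivative collapses to exactly the polynomial \eqref{eq:eqyatuib} with the substituted probabilities. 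Dividing through by $\text{prob}(X=\{2\}\{1\})$ then produces the polynomial in $s(1)$ stated in the theorem.

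For the boundary regimes $c \geq n-1$ and $c \leq 1/(n-1)$, I would invoke the general winners-take-all analogue of Theorem~\ref{thm:f1} established earlier in Section~\ref{sec:adjusting} to conclude that the unique symmetric equilibrium is for everyone to pick the option with the higher chance of lying in the first set of the partition; the derivative polynomial computed above does not vanish on the corresponding half-open subinterval of $[0,1]$, so no mixed symmetric equilibrium exists. In the interior regime $1/(n-1) < c < n-1$, the same earlier result rules out any pure symmetric equilibrium, so the symmetric equilibrium guaranteed by Nash's theorem must correspond to some $s(1) \in (0,1)$ at which $dP_i/ds_i(1) = 0$; standard second-order considerations then force the root to have odd multiplicity, since an even-multiplicity zero would allow agent $i$ to weakly improve by perturbing $s_i(1)$ in one direction. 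The main (though modest) obstacle is being careful with the tie term, since the partition-valued $X$ need not arise from any underlying real-valued comparison: one must verify directly that whenever the realized partition is $\{1,2\}$ the payoff to agent $i$ is independent of $w_i$, after which the algebraic reduction of the derivative proceeds verbatim.
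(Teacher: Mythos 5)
Your proposal is correct and matches the paper's approach: the paper proves this theorem precisely by observing that the argument for Theorem~\ref{th:twoprocesses} carries over under the dictionary $\{1\}\{2\} \leftrightarrow (Y_1 > Y_2)$, $\{2\}\{1\} \leftrightarrow (Y_1 < Y_2)$, $\{1,2\} \leftrightarrow (Y_1 = Y_2)$, which is exactly the translation you set up and verify. Your extra check that the single-block partition pays everyone $n/n - 1 = 0$ regardless of $w_i$ is the right detail to confirm and is consistent with the paper's reasoning.
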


\noindent Following this pattern, it is straightforward to generalize our corollaries and conjecture as well.

\section{Numerical study}

\label{sec:practicalAnalysis}
 
We use the Neng Python package \cite{Sebek2013} to conduct the numerical study in this section. This software gives us a mixed Nash equilibrium for a game given given its expected payoff matrix. To obtain these expected payoff matrices, we use Monte Carlo simulation. Note that the finding of mixed Nash equilibria does not generally have any known fast algorithms (and there is in fact evidence for the absence of such an algorithm \cite{daskalakis2009complexity}), which limits the practical usage of general solvers to cases with a small number of agents and processes.

The software we use cannot be constrained to return a symmetric equilibrium and it is not deterministic, in the sense that it can return different mixed Nash equilibria in different runs when applied to the same payoff matrix.  Our choice of diversification metric is then derived from associating with each Poisson process a sample average probability that it is chosen. To do so, we run the equilibrium-finding software $t$ times, and in each run we compute for each process the probability that an agent chosen uniformly at random will pick it in the equilibrium distribution that we obtained. Then, we average the $t$ resulting distributions \footnote{The plots here correspond to the case where $t=100$. From observing the convergence process, the absolute error in the resulting probabilities seems to be between $0.005$ and $0.02$.}. While this process is not fully rigorous, it does pass the sanity check of putting half of the weight on each process when we have two processes, at least $3$ agents, and each of the processes has the same rate. Moreover, it gives rise to trends that generally seem sensible.

We will start from the case where there are $3$ agents, $3$ processes, and their rates are $1$, $k$, and $k^2$ ($k< 1$), and see how the behavior of the assigned weights/probabilities evolves as we tweak the parameters. First, we consider the case with $k=0.95$, and observe what happens if we increase the number of agents or the number of processes. In the latter case, this is done by adding another process with rate again equal to $k$ times the rate of the last process. We obtain then the following plot:

\includegraphics[width=12cm]{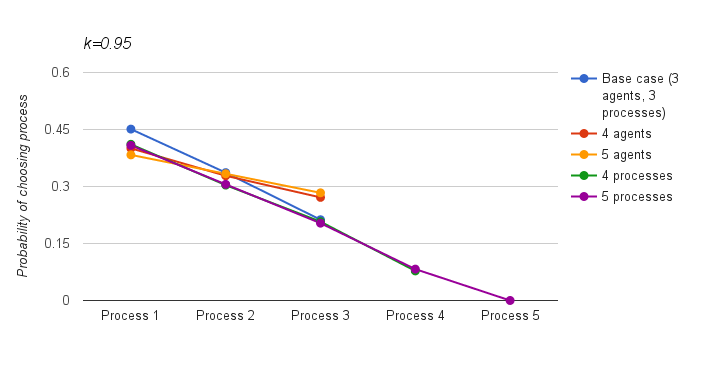}

One can observe that when we add a fourth agent, there is a loss of weight  (i.e. probability) for process $1$, and a corresponding increase in the weight for process $3$. However, adding a fifth agent seems to only barely increase the weight removed from process $1$, and given our margin of error of around $0.02$, it is not even clear that it does result in such a change. We believe that this is due to the fact that in the change to $4$ agents, our probabilities already get a lot closer to being uniform. One can also see (the green line is behind the purple line) that adding a fourth process with a rate of $(0.95)^3$ introduces more diversification in the answers of the agents, with the weight on process $4$ coming from processes $1$ and $2$. However, there is no more diversification when we introduce a fifth process with a smaller rate of $(0.95)^4$.

If we now repeat this process with $k=0.875$, we obtain the following graph:

\includegraphics[width=12cm]{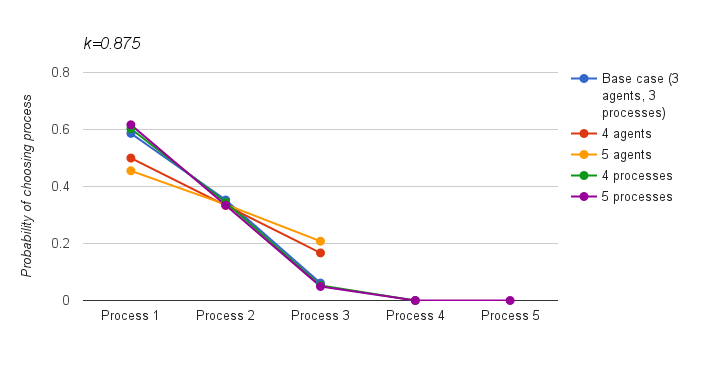}

With respect to the addition of agents, we see a similar pattern as before, except that in this situation there is less of a diminishing returns behavior in the introduction of a fifth agent. Note also that we start further from the uniform distribution to begin with. With respect to the addition of processes, the rates of the new processes $k^3=(0.875)^3$ and $k^4=(0.875)^4$ are not enough to increase the pressure for diversification.

If we further decrease the value of $k$, taking it down to $k=0.80$, we obtain:

\includegraphics[width=12cm]{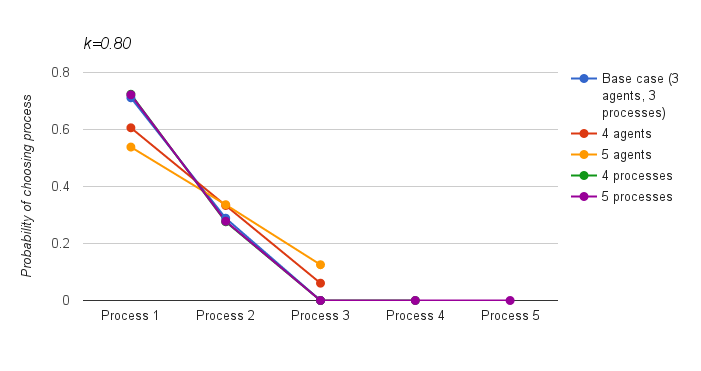}

We can see that initially there is no weight assigned to process $3$  (the blue and green lines are mostly hidden behind the purple line). However, after we add a fourth agent, there is a decrease in the probability assigned to process $1$, and corresponding increases  (more or less equal to each other)  to the probabilities for process $2$ and process $3$. Adding a fifth agent induces a smaller decrease of the weight assigned to process $1$, which this time seems to go entirely to process $3$.

The next value of $k$, equal to $0.725$, results in:

\includegraphics[width=12cm]{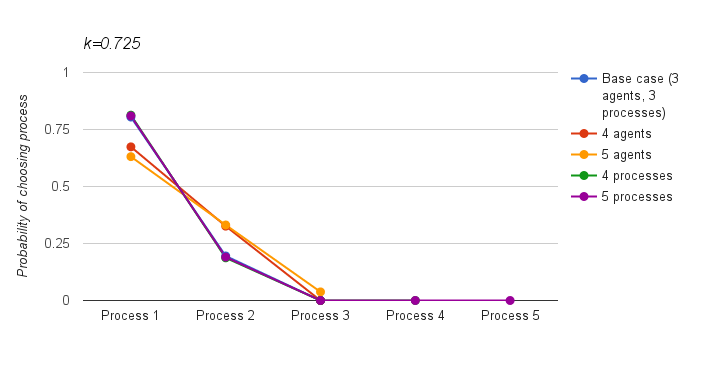}

Now there is again no weight initially assigned to process $3$. However, this time adding a fourth agent does not change this fact, and instead it significantly increases the probability assigned to process $2$. It is only when a fifth agent is introduces that the probability for process $3$ becomes non-zero, with the corresponding weight coming from process $1$.  

For the last value of $k$, equal to $0.650$, we obtain:

\includegraphics[width=12cm]{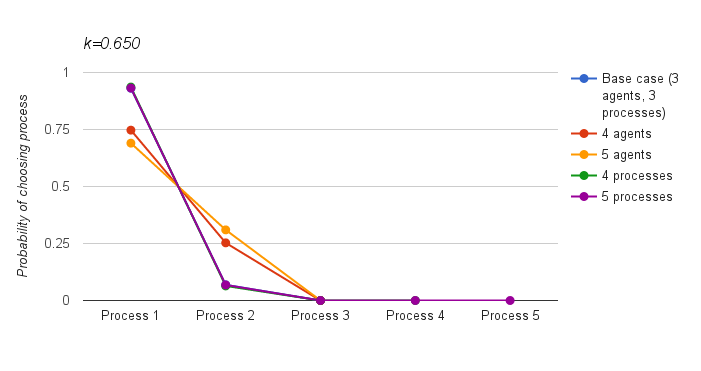}

This time, even adding a fifth agent is not enough for process $3$ to be given a non-zero weight. Instead, adding a fourth agent results in changes similar to the previous plot, and adding a fifth process results in an additional small increase for the probability for process $2$, at the expense of the weight for process $1$.

Overall, we can see in these plots distinct regimes with respect to the change in the probability assigned to a process that occurs after the introduction of more agents: once the probability is close enough to a final equilibrium value, its change when more agents are added is non-existing or very small. On the other hand, when the probability assigned to the process is still at $0$, it might take the addition of several agents before the probability starts moving. In the middle of these two regions, the probabilities seem to move quite fast.

This generalizes what we saw in Section \ref{sec:twoprocesses} when looking at the situation with just two processes. For example, we can observe in Figure \ref{fig:diagrams} that there is a region where adding a third agent does not make it optimal in terms of symmetric Nash equilibria to deviate from selecting the favorite, but adding a fourth agent does finally induce some pressure for diversification. However, it remains unclear which values does this diminishing return process converge to as the number of agents $n$ goes to infinity.  Our intuition is that in the case of just two processes, the probability of choosing a process will converge to the probability of that process having more events than the other, relative to the probability of no ties between them. Then, when more processes are added, this formula might need to be modified to take into account the probabilities associated with each of the partial tie situations.

The behavior we have seen is also consistent with a successive unloading of weight from the process(es) with the highest rate to those with a smaller weight. That is to say, it is consistent with the behavior where pressure to diversification first induces a tendency to pick the second best process instead of the first one, until a certain bound is fulfilled, and then it begins to mostly induce a tendency to pick the third best process instead of the first one, and so on.

We have also seen that  it was relevant to introduce a fourth process only  in the first case, where the rate for that fourth process was closest to the rates for the other three processes. We conjecture that the behavior with respect to the introduction of new processes is different to the behavior with respect to the introduction of new agents. What we mean by this is that we saw a behavior in our graphs that is consistent with the idea that after introducing a large enough number of agents, in the Nash equilibria there will be a non-zero probability assigned to each process. However, for processes we believe that if adding an extra process with rate $r$ does not lead to any diversification of choices in the Nash equilibria, the addition of any number of processes with rates $\leq r$ will also not lead to any diversification

As far as the rates of the processes are involved, one would also expect that given some absolute differences of rates between the processes, there will  be more diversification going on in an optimal choice between these processes when the relative difference of rates is smaller. This is because intuitively, two processes with rates given by 99.5 and 100 are closer to each other than two processes with rates 0.5 and 1.0. One can also appeal to the results in Section \ref{sec:twoprocesses} to justify our intuition, since the corresponding value of $c$ (the relative odds of the first process doing better versus it doing worse) will be given by a value close to $1$ in the former case, and by a value around $2.5$ in the latter case. To further verify our intuition, we offset  the rates in our initial experiment ($k=0.95$) by additive factors of $1.0$ and $-0.5$, and observe the results:

\includegraphics[width=12cm]{firstGraph.png}

\includegraphics[width=12cm]{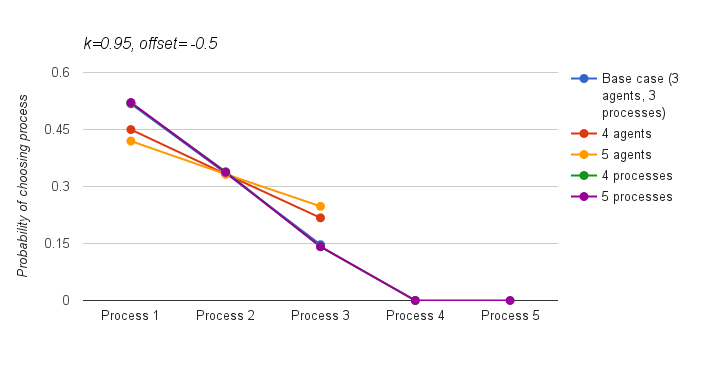}

\includegraphics[width=12cm]{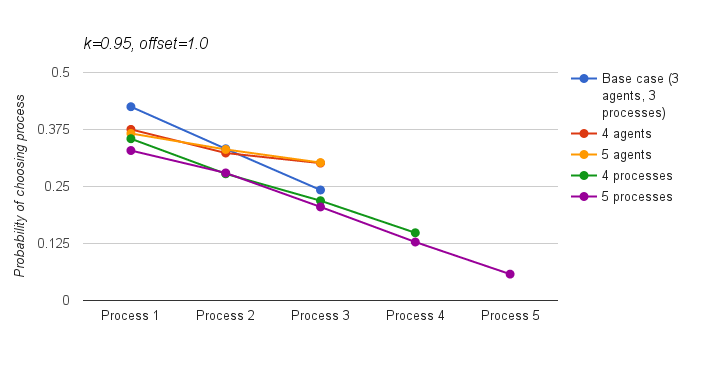}

As we can see, it does indeed hold that in the case  with larger relative differences between rates ($\text{offset}=-0.5$) there is less diversification that in the original $k=0.95$ graph. Similarly, there is more diversification in the case  with smaller relative differences between rates. We can also see that this phenomenon affects all five lines in the graph.

\section{Discussion}

We have looked from several angles at an example of the pressure for diversification in winners-take-all situations.  A general theme is that as the number of agents increases, it becomes more and more valuable to be the unique winner, since the size of the pool increases. Therefore, it makes more sense to engage in risk-seeking behavior to become this unique winner. However, we have also seen that there is also a property of diminishing returns with respect to the total number of agents as far as this behavior is concerned. One can see these properties for example in our very first result in Section \ref{sec:adjusting}, where the threshold that determines whether one should go with the underdog option decreases proportionally to $\frac{1}{n}$. One can also see it in the fact that any symmetric strategy is a Nash equilibrium for $n=2$, or in the plots that we obtain in Section \ref{sec:practicalAnalysis}, which show us a regime where  the number of agents is not enough to entice differentiation, a regime where a larger number of agents will not entice further differentiation, and an intermediate regime.

A careful reader will realize that outside of our graphs, we did not explicitly use in Section \ref{sec:anstudy} the fact that the independent variables $Y_1, \ldots, Y_m$ for the number of events correspond to Poisson distributions. One could use this fact in a theoretical analysis, and express quantities like $\text{prob}(Y_2 > Y_1)$, with $Y_1$ and $Y_2$ independent Poisson processes, as functions of the corresponding rates $\lambda_1$ and $\lambda_2$. This would lead to analytical results similar to the ones we obtain in Section \ref{sec:anstudy}, but expressed explicitly in terms of the rates of the Poisson processes under consideration.

Regarding further work, in Section \ref{sec:adjusting} it might be of interest to characterize the optimal responses of an agent to a wider variety of greedy choices from the other agents. Similarly, it would also be useful to obtain a closed formula for the roots of the polynomial in Theorem \ref{th:twoprocesses}.

It might be of interest as well to consider the difference in terms of diversification between the not-necessarily-symmetric Nash equilibria that we analyze in Section \ref{sec:practicalAnalysis}, and the restriction to symmetric Nash equilibria. More formally, one could consider the probability associated with each process by sampling mixed Nash equilibria uniformly at random, and then players at random. Then, one could repeat this process considering only symmetric Nash equilibria, and compare the results. Our numerical exploration seems to possibly have found a case where these two measures would have values that are close to each other, but different. This is the case where there are $2$ processes, with rates $1.25$ and $1$, and $3$ agents. The only symmetric Nash equilibrium corresponds to selecting the first process with probability $\approx 0.76$, as can be obtained from Corollary \ref{col:3agents}. However, our numerical exploration found other mixed Nash equilibria, and suggests that the average probability with which the first process is selected in a Nash equilibrium is equal to $\approx 0.70 < 0.76$.

It is also possible to slightly tweak our Poisson-picking setting, and study Poisson-picking pools with non-independent Poisson processes. As we mentioned in the introduction, this would allow us to better model the situation where a group of friends are betting about who will be the top goal-scorer in a competition, rather than on the top goal-scoring team. Additionally, one could also extend our analysis to the case where each agent chooses $k$ out of $m$ Poisson processes, and the counts from the chosen processes are added together when determining the payoffs. This would result in a model closer to the context for daily fantasy sports, studied for example in \cite{hunterpicking}. 

Another real-life example that our analysis could extend to is that of football pools betting, with global sales on the order of magnitude of a billion of dollars as recently as 2007 \cite{forrest2013football}. In these games, each contestant is trying to guess outcomes for a selection of association football games, and those who are the most accurate get to split the pool between the them. This is similar to the winners-take-all situation we studied in this game, and leads again to a pressure for diversification.

Along these lines of adapting our model more closely to real-life situations, one could also try to arrive to behavioral models of real-life participants in parimutuel games. This would likely involve modeling the participants using several distinct sub-models, according to their goals (which might not necessarily be to maximize their expected income) and their level of sophistication. This approach is taken by the work in \cite{bayraktar2017high}, which answers in the negative the question of whether a particular simple stratified model results in some of the phenomena that can be observed in real-life parimutuel games.

\section*{Acknowledgement}
Thanks for useful feedback are due to Kate Larson, for whose CS886 offering this research was originally conducted.

\bibliographystyle{alpha}
\bibliography{arxivReferences}

\end{document}